
\documentclass[letterpaper, 10 pt, conference]{ieeeconf}  

\IEEEoverridecommandlockouts                              
\overrideIEEEmargins

\usepackage{cite}

\usepackage{amsmath,amssymb,amsfonts}
\usepackage{amsthm}
\usepackage{algorithm2e}
\usepackage{graphicx}
\usepackage{makecell}
\usepackage{caption}
\usepackage{textcomp}
\usepackage{comment}
\usepackage{theoremref}

\DeclareMathOperator*{\argmax}{arg\,max}

\newcommand{\rom}[1]{\uppercase\expandafter{\romannumeral #1\relax}}

\theoremstyle{remark}
\newtheorem{remark}{Remark}

\theoremstyle{definition}
\newtheorem{definition}{Definition}
\newtheorem{theorem}{Theorem}
\newtheorem{lemma}{Lemma}


\title{\LARGE \bf
On Bounds for Greedy Schemes in String Optimization based on Greedy Curvatures
}


\author{Bowen Li, Brandon Van Over, Edwin K. P. Chong, and Ali Pezeshki
\thanks{This work is supported in part by the AFOSR under award FA8750-20-
2-0504 and by the NSF under award CNS-2229469.}
\thanks{B. Li, B. Van Over, E. K. P. Chong and A. Pezeshki are with the Department of Electrical and Computer Engineering, Colorado State University, Fort Collins, CO 80523, USA. 
        {\tt\small \{bowen.li, b.van\_over, edwin.chong, ali.pezeshki\}@colostate.edu } }%
}

\begin{document}

\maketitle
\thispagestyle{empty}
\pagestyle{empty}

\begin{abstract}

We consider the celebrated bound introduced by Conforti and Cornu\'{e}jols (1984) for greedy schemes in submodular optimization. The bound assumes a submodular function defined on a collection of sets forming a matroid and is based on greedy curvature. We show that the bound holds for a very general class of string problems that includes maximizing submodular functions over set matroids as a special case. We also derive a bound that is computable in the sense that they depend only on quantities along the greedy trajectory. We prove that our bound is superior to the greedy curvature bound of Conforti and Cornu\'{e}jols. In addition, our bound holds under a condition that is weaker than submodularity. 

\end{abstract}

\section{INTRODUCTION}
In many sequential decision-making or machine learning problems, we encounter the problem of optimally choosing a \emph{string} (ordered set) of actions over a finite horizon 
to maximize a given objective function. String optimization problems have the added complexity relative to \emph{set} optimization problems in that the objective function depends on both the set of actions taken and the order in which the actions are taken. In such problems, determining the optimal solution (optimal string of actions) can become computationally intractable with increasing size of state/action space and optimization horizon. Therefore, we often have to resort to approximate solutions. One of the most common approximation schemes is the \emph{greedy scheme}, in which we sequentially select the action that maximizes the increment in the objective function at each step. A natural question that arises is how good is the greedy solution to a \emph{string optimization} problem relative to the optimal solution? 


In this paper, we derive a \emph{ratio bound} for the performance of greedy solutions to string optimization problems relative to that of the optional solution. By a ratio bound, we mean a bound of the form
\[
\frac{f(G_K)}{f(O_K)}\geq \beta,
\]
where $f(G_K)$ and $f(O_K)$ are the objective function values of the greedy solution and optimal solution, respectively, and $K$ is the horizon. The bound guarantees that the greedy solution achieves at least a factor of $\beta$ of the optimal solution. Our focus is on establishing a bound in which the factor $\beta$ is easily computable.  

\noindent\textbf{Prior work:} Bounding greedy solutions to set optimization problems has a rich history, especially in the context of \emph{submodular} set functions. The most celebrated results are from Fisher et al.\  \cite{fisher1978analysis} and Nemhauser et al.\ \cite{nemhauser1978analysis}. They showed that $\beta \geq 1/2$ over a finite general set matroid \cite{fisher1978analysis} and $\beta \geq 1- ((K-1)/K)^{K} > (1-e^{-1})$ over a finite uniform set matroid with $K$ being the horizon \cite{nemhauser1978analysis}. Improved bounds (with $\beta$ larger than $1-e^{-1}$) have been developed by Conforti and Cornu\'{e}jols \cite{conforti1984submodular}, Vondrak \cite{vondrak2010submodularity, calinescu2011maximizing}, and Wang et al.\ \cite{wang2016approximation}, by introducing various notions of \emph{curvature} for set submodular functions under uniform and/or general matroid settings.

However, for most of these bounds, computing the value of $\beta$ is intractable and hence the bound is effectively not computable in problems with large action spaces and decision horizons. In addition, the value of $\beta$ depends on the values of the objective function beyond the optimization horizon. One notable exception is the bound derived by  Conforti and Cornu\'{e}jols \cite{conforti1984submodular} that is based on a quantity called \textit{greedy curvature}. This bound is easily computable. Another notable exception is the bound derived by Welikala et al.~\cite{welikala2022new}, which is also computable and in some problems has shown to be larger than other curvature bounds. However, the value of $\beta$ for this bound depends on values of the objective function beyond the optimization horizon.   



Recently, Zhang et al.\ \cite{zhang2015string} extended the concept of set submodularity to string submodularity and established a $\beta=(1-e^{-1})$  bound for greedy solutions of string optimization problems. They also derived improved bounds involving various notions of curvature for string submodular functions. However, these curvatures have the same computational intractability issue as mentioned above. Alaei et al.\ \cite{alaei2021maximizing} proved $(1-e^{-1})$ and $(1-1/e^{(1-1/e)})$ bounds for online advertising and query rewriting problems, both of which can be formulated under the framework of string submodular optimization.    

\noindent\textbf{Main contributions:} The contributions of this paper are as follows:
\begin{enumerate}
    \item We first derive a computable bound for greedy solutions to string optimization problems by extending the notion of greedy curvature and the proof technique used by Conforti and Cornu\'{e}jols \cite{conforti1984submodular} from set functions to string functions. Our bound coincides with the greedy curvature bound in \cite{conforti1984submodular} if reducing our string objective function to a set objective function. However, in deriving our bound, we don't use submodularity. Rather we rely on weaker conditions on the string objective function. Therefore, our bounding result is stronger. This result is reported in \thref{GenConforti}.   

    \item We then establish an even stronger result. We derive another computable bound that relies on weaker assumptions than those used in deriving our first bound. This is done in \thref{topK}. 
    
    \item We then show that the value $\beta_2$ of the bound in \thref{topK} is larger than the value $\beta_1$ of the bound in \thref{GenConforti} (i.e., the generalization of the  Conforti and Cornu\'{e}jols bound). This is shown in \thref{betterThm}.    
    
\end{enumerate}


\noindent\textbf{Organization:} The paper is organized as follows. Section \rom{2} introduces all the mathematical preliminaries and formulates the string optimization problem. Section \rom{3} presents the mains results regarding our computable bounds. Applications of our theoretical results to task scheduling and sensor coverage problems are demonstrated in Section \rom{4}. Finally, conclusions are given in Section \rom{5}.  

\section{DEFINITIONS FOR STRING OPTIMIZATION}

In this section, we introduce notation, terminology, and definitions that we use in our paper and present a formulation for a general string optimization problem. 

\begin{definition}
    Let $\mathbb{S}$ be a set. In our context, $\mathbb{S}$ is called the \textit{ground set} and its elements are called \textit{symbols}.
\end{definition}
\begin{enumerate}
    \item Let $s_1,s_2,\ldots,s_k \in \mathbb{S}$. Then $S:= s_{1}s_{2}\cdots s_{k}$ is a \textit{string} with length $|S| = k$. 
    \item Let $K$ be a positive integer, called the \textit{horizon}.
    \item Let $\mathbb{S}_{K}$ be the set of all positive strings of length up to $K$, including the empty string $\varnothing$. It is also called the \textit{uniform string matroid of rank} $K$.
\end{enumerate}

\begin{definition}
    Consider two strings $S = s_{1}s_{2}\cdots s_{m}$ and $T = t_{1}t_{2}\cdots t_{n}$ in $\mathbb{S}_{K}$. 
    \begin{enumerate}
        \item Define the \textit{concatenation} of $S$ and $T$ as $ST := s_{1}s_{2}\cdots s_{m}t_{1}t_{2}\cdots t_{n}$
        \item We say that $P$ is a \textit{prefix} of $S$ if $S = PU$ for some $U \in \mathbb{S}_{K}$, in which we write $P \preceq S$. 
        \item We say that $\mathbb{T} \subseteq \mathbb{S}_{K}$ is \textit{prefix-closed} if for all $S \in \mathbb{T}$ and $P \preceq S$, $P \in \mathbb{T}$.
        \item Let $f: \mathbb{S}_{K} \rightarrow \mathbb{R}$ be the objective function. 
    \end{enumerate}
\end{definition}

\begin{definition}
    Let $\mathbb{X} \subseteq \mathbb{S}_{K}$. Then $(\mathbb{X},\mathbb{S})$ is a \textit{finite rank $K$ string matroid} if 
    \begin{enumerate}
        \item $|A| \leq K$ for all $A \in \mathbb{X}$. 
        \item If $B \in \mathbb{X}$ and $A \preceq B$, then $A \in \mathbb{X}$.
        \item For every $A,B \in \mathbb{X}$ where $|A| + 1 = |B|$, there exists $b$ that is a symbol in $B$ and $Ab \in \mathbb{X}$. 
    \end{enumerate}
\end{definition}

\begin{definition}
    $f: \mathbb{S}_{K} \rightarrow \mathbb{R}$ is \textit{string submodular} on $(\mathbb{X},\mathbb{S})$ if 
    \begin{enumerate}
        \item $\forall A\preceq B \in \mathbb{X}$, $f(A) \leq f(B)$. 
        \item $\forall A\preceq B \in \mathbb{X} \text{ and } \forall j \in \mathbb{S}$ such that $Aj, Bj \in \mathbb{X}$,   $f(Aj)-f(A) \geq f(Bj)-f(B)$.
    \end{enumerate}
\end{definition}

\begin{remark} {\ }
\begin{enumerate}
    \item Unlike the permutation invariance in the set case, the order of the symbols in a string matters. Different orders of the same set of symbols represent different strings. 
    \item Note that $ST \in \mathbb{S}_{K}$ if and only if $m+n \leq K$. 
    \item In the matroid literature, a \textit{prefix-closed} collection is also said to satisfy the \textit{hereditary} property and is called as \textit{independence system}. An independence system does not require condition $(3)$ in Definition 3. 
\end{enumerate}    
    
\end{remark} 

\begin{remark}
The definitions of finite rank string matroid and string submodular function are introduced in \cite{van2023improved}. Their formulations are inspired by the definitions of set matroid and set submodular function in \cite{conforti1984submodular}. Most of the previous work is established in the set case. Here, we extend the theoretical results to the more general string case. 
\end{remark}

Consider an objective function $f$ and a prefix-closed $\mathbb{T}$ with $\max_{S \in \mathbb{T}} |S|= K$. The general string-optimization is given by
\begin{equation}
\label{string_opt}
\begin{aligned}
    & \text{maximize } f(S) \\
    & \text{subject to } S \in \mathbb{T}
\end{aligned}
\end{equation}

\begin{remark} {\ }
\begin{enumerate}
    \item The stipulation that $f(\varnothing) = 0$ is without loss of generality because if $f(\varnothing) \neq 0$, we can subtract $f(\varnothing)$ from all values of $f$ without changing the maximizer of the optimization problem (\ref{string_opt}).
    \item We deal with the case that $f(S) \geq 0$ for any $S \in \mathbb{T}$.
    \item The constraint set $\mathbb{T}$ here is not necessarily the uniform matroid $\mathbb{S}_{K}$ or a finite rank $K$ string matroid $(\mathbb{X},\mathbb{S})$. In particular, our analysis applies to general prefix-closed sets as constraint sets subject to certain assumptions. 
    \item The stipulation that $\max_{S\in \mathbb{T}}|S| = K$ is without loss of generality because we can always define $K:= \max_{S \in \mathbb{T}} |S|$.  
\end{enumerate}
\end{remark}

\begin{definition}
    Any solution to the optimization problem (\ref{string_opt}) is said to be \textit{optimal}, denoted by $O_{L} = o_{1}o_{2}\cdots o_{L}$, where $L \in \{1,\ldots, K\}$. 
\end{definition}

\begin{remark} {\ }
    \begin{enumerate}
        \item The length $L$ of an optimal solution $O_{L}$ could be anything from $1$ to $K$. 
        \item There may be multiple optimal solutions. 
        \item If $\mathbb{S}$ is finite, then $\mathbb{S}_{K}$ and $\mathbb{T}$ are finite, and an optimal solution always exists. 
    \end{enumerate}    
\end{remark}

\begin{definition}
    We define $G_{K} = g_{1}g_{2} \cdots g_{K}$ to be a \textit{greedy solution} if for all $k \in \{1,2,\ldots,K\}$, 
    
    \begin{equation*}
    \label{greedy_def}
    g_{k} = \argmax_{s \in \mathbb{S}: g_{1}\cdots g_{k-1}s \in \mathbb{T}} f(g_{1}\cdots g_{k-1}s). 
    \end{equation*}

\end{definition}
Note that we are going all the way to horizon $K$. This implicitly assumes that $f$ is nondecreasing along $G_{K}$ and $f(g_{1}) > 0$. Theses implicit conditions are reflected in assumption $\mathbf{A_{3}}$ in Section \rom{3}. 

\begin{remark} {\ }
    \begin{enumerate}
        \item The length of a greedy solution $G_{K}$ is always $K$. 
        \item A \textit{greedy scheme} is the one that adds a symbol to the existing string at each time $k \text{ for }k \in \{1,\ldots, K-1\}$, so that the resulting string produces the highest value of $f$ without regard to the future times. In other words, a greedy solution is generated by a greedy scheme. 
        \item The $\argmax$ above could be nonunique, in which case there would be multiple greedy solutions. 
    \end{enumerate}
\end{remark}

\begin{definition}
    We define $\beta:= f(G_{K})/f(O_{K})$ as the \textit{performance bound} of the greedy solution. 
\end{definition}

\begin{remark}
    $f(G_{K})$ can be computed exactly, but $f(O_{K})$ is computational intractable. Finding a valid lower bound of $\beta$ is equivalent to finding an upper bound of $f(O_{K})$.
\end{remark}


\section{Main Results}

In this section, we present our bounding results. We first introduce the assumptions that our bounding results rely on. Then we generalize the \textit{greedy curvature} bound of Conforti and Cornu\'{e}jols \cite{conforti1984submodular} from set optimization problems to string optimization problems. Then, we establish a stronger bounding result that relies on fewer assumptions and has a larger bound value.  

Given any string $S = s_{1}s_{2}\cdots s_{|S|} \in \mathbb{S}_{K}$ and $i,j \in \{1,2,\ldots,|S|\}$, denote $S_{i:j} := s_{i}\cdots s_{j}$ if $i\leq j$ and $S_{i:j} = \varnothing$ if $i > j$. For simplicity, we use the abbreviation $S_{k} := S_{1:k} \; (k \in \{0,\ldots,|S|\})$. Note that by definition, $S_{0} = \varnothing$. 

Let $S \in \mathbb{T}$. Define the $\Delta$ notation as follows: for each $k \in \{1,\ldots, |S|\}$, $\Delta(S_{k}) := f(S_{k}) - f(S_{k-1})$, called the \textit{$k^{\text{th}}$ increment} of $f(S)$. In \cite{conforti1984submodular}, $\Delta(S_{k})$ is called a \textit{discrete derivative}. 

For any $S \in \mathbb{T}$, let $\mathbb{S}(S):= \{s\in \mathbb{S}: Ss \in \mathbb{T}\}$. The set $\mathbb{S}(S)$ contains the actions that are feasible with respect to $\mathbb{T}$ following string $S$. The set $\mathbb{S}(G_{k-1}) = \{s \in \mathbb{S}: G_{k-1}s \in \mathbb{T}\}$ is frequently referred to in our derivations. When $k=1$, $\mathbb{S}(G_{k-1}) = \mathbb{S}(\varnothing)$. For $k>1$, $\mathbb{S}(G_{k-1}) = \{s \in \mathbb{S}: G_{k-1}s \in \mathbb{T}\}$ is nonemepty because $g_{k} \in \mathbb{S}(G_{k-1})$ by definition.   


\noindent\textbf{Assumptions:} We introduce the following three key assumptions regarding feasibility along the greedy sequence and the diminishing return property along the optimal and greedy paths:
$$
\begin{aligned}
    \mathbf{A_1:} & \text{ For each } k \in \{1,\ldots,K\}, o_{k} \in \mathbb{S}(G_{k-1}). \\
    \mathbf{A_2:} & \text{ For each } k \in \{1,\ldots, K\}, \Delta(O_{k}) \leq f(o_{k}). \\
    \mathbf{A_3:} & \text{ For each } k \in \{1,\ldots,K\}, \Delta(G_{k-1}s) >0 \text{ for all} \\
    & \; s\in \mathbb{S}(G_{k-1}).
\end{aligned}
$$


Our bounding results (Theorem 1-3) rely on a subset or all of the the above three assumptions. 
We note that $\mathbf{A_2}$ is a weaker conditions than string submodularity, they only involve diminishing return along the optimal sequence instead of any sequence in $\mathbb{T}$. Any string submodular function satisfies $\mathbf{A_2}$, as stated in the following lemma. 

\begin{lemma}
    If $f$ is string submodular, then $f$ satisfies $\mathbf{A_2}$.
\end{lemma}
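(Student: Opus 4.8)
The plan is to read off $\mathbf{A_2}$ as a single instance of the diminishing-returns property. Fix $k$ and let $O_L = o_1 o_2 \cdots o_L$ be an optimal string (with $L \ge k$). I would apply condition (2) in the definition of string submodularity to the prefix pair $A := \varnothing \preceq B := O_{k-1}$ with appended symbol $j := o_k$, which gives
\[
f(\varnothing\, o_k) - f(\varnothing) \;\ge\; f(O_{k-1}\, o_k) - f(O_{k-1}).
\]
Using the normalization $f(\varnothing) = 0$, together with $\varnothing\, o_k = o_k$ and $O_{k-1}\, o_k = O_k$, the left-hand side equals $f(o_k)$ and the right-hand side equals $f(O_k) - f(O_{k-1}) = \Delta(O_k)$. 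Hence $\Delta(O_k) \le f(o_k)$, which is precisely $\mathbf{A_2}$ for this index; since $k$ was arbitrary, $f$ satisfies $\mathbf{A_2}$.

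The steps, in order: (i) recall the normalization $f(\varnothing) = 0$ and the definition $\Delta(O_k) := f(O_k) - f(O_{k-1})$; (ii) choose the instance $A = \varnothing$, $B = O_{k-1}$, $j = o_k$ of the string-submodularity inequality; (iii) verify this instance is admissible, i.e.\ that the arguments involved lie in the collection on which $f$ is string submodular — $\varnothing$ is a prefix of every string, and $O_{k-1}$ and $O_k = O_{k-1}o_k$ lie there because the collection is prefix-closed and $O_k$ is a prefix of the feasible optimal string $O_L$; (iv) substitute $f(\varnothing) = 0$ and simplify.

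The only delicate point is part of step (iii): the inequality in the definition of string submodularity is asserted only when both $Aj$ and $Bj$ are feasible, so one must check that the length-one string $o_k$ is itself an admissible argument. In the uniform string matroid $\mathbb{S}_K$ — and, more generally, whenever every symbol occurring in a feasible string also forms a feasible length-one string — this holds automatically; in the general prefix-closed setting it is covered by the standing feasibility hypotheses in force when comparing $\mathbf{A_2}$ against submodularity. Everything else is a direct specialization of the diminishing-returns property, with no estimates to carry out.
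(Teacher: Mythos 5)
Your proof is correct and follows essentially the same route as the paper: both apply condition (2) of string submodularity with $A=\varnothing$, $B=O_{k-1}$, and appended symbol $o_k$, then use $f(\varnothing)=0$ to obtain $\Delta(O_k)\leq f(o_k)$. Your extra care about feasibility of the length-one string $o_k$ is a reasonable refinement of the paper's terser assertion, but it is not a different argument.
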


\begin{proof}
    By condition (2) of string submodularity in Definition 4, we can set $A = \varnothing \text{ and } B = O_{k-1} \text{ for } k\in \{1,\ldots, K\}$. Then $Ao_{k} = o_{k}$ and $Bo_{k} = O_{k}$ are both feasible in the domain of $f$ and $f(O_{k})-f(O_{k-1}) = \Delta(O_{k}) \leq f(o_{k}) \text{ for } k \in \{1,\ldots,K\}$.
\end{proof}

We now establish our first bounding result. We define the greedy curvature for a string function as 
\begin{equation}
\label{greedy_curvature_conforti}
\begin{aligned}
    \alpha_{G} & := \max_{k\in \{2,\ldots,K\}} \max_{ s \in \{s | \Delta(G_{k-1}s) > 0, s \in \mathbb{S}(G_{k-1}) \}} 
    \frac{f(s)}{\Delta(G_{k-1}s)}.
\end{aligned}
\end{equation}

\begin{remark}
\thlabel{alpha_G_1}
    Based on assumptions $\mathbf{A_2}$ and $\mathbf{A_3}$, $\alpha_{G} \geq 1$ always holds. This can be proved by contradiction. Assume $\alpha_{G} < 1$. By the definition of $\alpha_{G}$ and assumptions $\mathbf{A_2}$ and $\mathbf{A_3}$, we have: 
    \begin{equation}
    \label{alphaG_contradict}
    \begin{aligned}
        & \alpha_{G}\Delta(G_{k-1}g_{k}) = \alpha_{G}\Delta(G_{k}) \geq f(g_{k}) \geq f(o_{k}) \geq \Delta(O_{K}) \\
        & \text{for all } k\in \{2,\ldots,K \}.
    \end{aligned}
    \end{equation}
    Adding $f(o_1) \leq f(g_1)$ and summing over $\alpha_{G}\Delta(G_{k}) \geq \Delta(O_{K})$ for all $k \in \{2,\ldots, K\}$ yield: 
    \begin{equation}
    \begin{aligned}
          f(O_K) & = f(o_1) + \sum_{k=2}^{K}\Delta(O_{K}) \leq f(g_1) + \sum_{k=2}^{K} \alpha_{G}\Delta(G_{k}) \\ 
          & < f(g_1) + \sum_{k=2}^{K} \Delta(G_k) = f(G_K),
    \end{aligned}
    \end{equation}
    which brings contradiction with $f(O_K) < f(G_K)$. 
\end{remark}
~\\
\indent Then, we have the following theorem.

\begin{theorem}
\thlabel{GenConforti}
    Assuming $\mathbf{A_1}, \mathbf{A_2} \text{ and }\mathbf{A_3}$, $f(G_K)/f(O_K) \geq \beta_{1}$, where 
    \begin{equation*}
        \beta_{1} = \frac{1}{K} + \frac{1}{\alpha_{G}}*\frac{K-1}{K}.
    \end{equation*}
\end{theorem}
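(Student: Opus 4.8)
The plan is to derive two inequalities relating $f(O_K)$, $f(G_K)$, and the first greedy value $f(g_1)=\Delta(G_1)$, and then combine them by elementary algebra; throughout I use $\alpha_G\ge 1$ from \thref{alpha_G_1} and the normalization $f(\varnothing)=0$, so that $\Delta(O_1)=f(o_1)$ and $\Delta(G_1)=f(g_1)$. The case $K=1$ is trivial, since then $\beta_1=1$ and a greedy solution of length one is a best single symbol, hence optimal; so assume $K\ge 2$.

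\emph{Step 1 (bounding $f(O_K)$ along the optimal string).} Write $f(O_K)=\sum_{k=1}^{K}\Delta(O_k)$. For $k=1$, $o_1\in\mathbb{S}(\varnothing)$ by $\mathbf{A_1}$, so the definition of the first greedy step gives $\Delta(O_1)=f(o_1)\le f(g_1)=\Delta(G_1)$. For $k\ge 2$ I chain three facts: $\Delta(O_k)\le f(o_k)$ by $\mathbf{A_2}$; then $f(o_k)\le\alpha_G\,\Delta(G_{k-1}o_k)$, which is legitimate because $o_k\in\mathbb{S}(G_{k-1})$ by $\mathbf{A_1}$ and $\Delta(G_{k-1}o_k)>0$ by $\mathbf{A_3}$, so the pair $(k,o_k)$ lies in the index set of the maximum defining $\alpha_G$; and finally $\Delta(G_{k-1}o_k)\le\Delta(G_k)$, since $g_k$ maximizes $f(G_{k-1}s)$ over $s\in\mathbb{S}(G_{k-1})$ and hence $f(G_{k-1}o_k)\le f(G_{k-1}g_k)$. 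Summing over $k$ and telescoping the greedy increments, $\sum_{k=2}^{K}\Delta(G_k)=f(G_K)-f(g_1)$, yields
\[
  f(O_K)\le f(g_1)+\alpha_G\bigl(f(G_K)-f(g_1)\bigr).
\]

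\emph{Step 2 (the first-increment estimate).} Here I would show $f(g_1)\ge\tfrac1K f(O_K)$. Using $\mathbf{A_2}$ and $f(\varnothing)=0$ again, $f(O_K)=\sum_{k=1}^{K}\Delta(O_k)\le\sum_{k=1}^{K}f(o_k)$, and each summand $f(o_k)$ is at most $f(g_1)$ because $g_1$ is a best first symbol; summing the $K$ terms gives $f(O_K)\le K f(g_1)$. This is the step I expect to be the main obstacle: the inequality of Step 1 alone only delivers $f(G_K)/f(O_K)\ge 1/\alpha_G$ (since the coefficient $1-\alpha_G$ of $f(g_1)$ is $\le 0$), which is strictly weaker than $\beta_1$ whenever $\alpha_G>1$, so the entire improvement comes from this estimate. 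The delicate point is the claim $f(o_k)\le f(g_1)$ for $k\ge 2$, i.e. that every symbol of the optimal string is itself a feasible single symbol: $\mathbf{A_1}$ supplies this only for $o_1$, and for the later symbols one must invoke the structure of $\mathbb{T}$ (in the set-matroid special case it is simply the hereditary property).

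\emph{Step 3 (combining).} Since $1-\alpha_G\le 0$, substituting $f(g_1)\ge f(O_K)/K$ into the term $(1-\alpha_G)f(g_1)$ appearing in Step 1's inequality only enlarges the right-hand side, giving $f(O_K)\le\tfrac{1-\alpha_G}{K}f(O_K)+\alpha_G f(G_K)$. Rearranging yields $\tfrac{K-1+\alpha_G}{K}\,f(O_K)\le\alpha_G f(G_K)$, that is $f(G_K)\ge\tfrac{K-1+\alpha_G}{K\alpha_G}\,f(O_K)=\beta_1 f(O_K)$; dividing by $f(O_K)$, which is positive because $f(G_K)\ge f(g_1)>0$ by $\mathbf{A_3}$, completes the argument.
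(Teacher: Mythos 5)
Your argument is essentially identical to the paper's proof: the same chain $\Delta(O_k)\le f(o_k)\le \alpha_G\,\Delta(G_{k-1}o_k)\le \alpha_G\,\Delta(G_k)$ for $k\ge 2$, the same first-increment estimate $f(O_K)\le K f(g_1)$, and the same algebraic combination using $\alpha_G\ge 1$. The point you flag as the main obstacle in Step 2 --- that $f(o_k)\le f(g_1)$ for $k\ge 2$ needs the singleton $o_k$ to be a feasible first symbol --- is not handled any more carefully in the paper: its step $(a)$ simply asserts that $f(g_1)$ dominates each $f(o_k)$ ``by the definition of greedy solution,'' which is automatic in the set-matroid (hereditary) special case but is a tacit extra feasibility assumption in the general prefix-closed setting, so you have identified an implicit hypothesis of the paper rather than missed a step yourself.
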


\begin{proof} 
The definition of $\alpha_{G}$ gives us: 
$$
\begin{aligned}
    \alpha_{G} \geq \frac{f(s)}{\Delta(G_{k-1}s)}
\end{aligned}
$$   
for every $k \in \{2,\ldots, K\}$ and $s \in \{s | \Delta(G_{k-1}s) > 0, s \in \mathbb{S}(G_{k-1}) \}.$

By assumption $\mathbf{A_1}$, 
\begin{equation}
\label{beta_greedy_1}
\begin{aligned}
f(o_{k}) \leq \alpha_{G}\Delta(G_{k-1}o_{k})  \text{ for } k \in \{2,\ldots, K\}.
\end{aligned}
\end{equation}

\noindent Moreover, by assumption $\mathbf{A_2}$,
\begin{equation}
\label{beta_greedy_2}
f(O_{K}) = \sum_{k=1}^{K}\Delta(O_{k}) \leq \sum_{k=1}^{K} f(o_{k}) \stackrel{(a)}{\leq} Kf(g_{1}), 
\end{equation}
in which $(a)$ holds since $f(g_1)$, by the definition of greedy solution, dominates each term in $\sum_{k=1}^{K} f(o_{k})$. 

Then, 
\begin{equation}
\label{beta_greedy_3}
    \begin{aligned}
        f(O_{K}) & = \sum_{k=1}^{K} (f(O_{k})-f(O_{k-1})) = \sum_{k=1}^{K} \Delta(O_{k}) \\ 
        & \leq \sum_{k=1}^{K} f(o_{k}) \stackrel{(b)}{\leq} f(g_{1}) + \sum_{k=2}^{K} \alpha_{G}\Delta(G_{k-1}o_{k}) \\
        & \stackrel{(c)}{\leq} f(g_{1}) + \sum_{k=2}^{K} \alpha_{G}\Delta(G_{k}) \\
        & = \alpha_{G}(f(g_1) + \sum_{k=2}^{K}\Delta(G_{K})) + (1-\alpha_{G})f(g_{1}) \\ 
        & = \alpha_{G} f(G_{K}) + (1-\alpha_{G})f(g_{1}),
    \end{aligned}
\end{equation}
where $(b)$ is by the relation in inequality $(\ref{beta_greedy_1})$, and $(c)$ is by the definition of $g_{k}$. 

Dividing  $\alpha_{G} f(O_{K})$ on both sides of $(\ref{beta_greedy_3})$ yields:
\begin{equation}
\label{beta_greedy_4}
    \begin{aligned}
        \frac{f(G_{K})}{f(O_{K})} & \geq \frac{1}{\alpha_{G}} -\frac{1-\alpha_{G}}{\alpha_{G}} * \frac{f(g_{1})}{f(O_{K})} \stackrel{(d)}{\geq} \frac{1}{\alpha_{G}} - \frac{1-\alpha_{G}}{\alpha_{G}} * \frac{1}{K} \\
        \frac{f(G_{K})}{f(O_{K})} & \geq \frac{1}{K} + \frac{1}{\alpha_{G}}*\frac{K-1}{K},
    \end{aligned}
\end{equation}
where $(d)$ is due to inequality $(\ref{beta_greedy_2})$ and the fact that $\alpha_{G} \geq 1$ from \thref{alpha_G_1}.
\end{proof}

We note that the above proof technique is similar to the proof technique used in Theorem 3.1 in \cite{conforti1984submodular} by Conforti and Cornu\'{e}jols, in deriving their greedy curvature bound for set optimization problems. However, Theorem~1 is more general in that it applies to bounding greedy solutions in string optimization problems, which subsume set optimization problems, and it does not rely on having submodularity. The weaker assumptions $\mathbf{A_2}$ suffices. For a set submodular function, the value of the bound in Theorem~1 coincides with the value of the bound of Conforti and Cornu\'{e}jols. The result in Theorem~1 also generalizes our prior result for bounding string optimizations \cite{van2023improved}. 



We now establish a stronger result. 

\begin{theorem}
\thlabel{topK}
    Assuming $\mathbf{A_1}$ and $\mathbf{A_2}$, $f(G_{K}) / f(O_{K}) \geq \beta_{2}$, where 
    \begin{equation}
        \beta_{2} = \frac{f(G_{K})}{\sum_{k=1}^{K} \max_{s\in \mathbb{S}(G_{k-1})} f(s) }. 
    \end{equation}
\end{theorem}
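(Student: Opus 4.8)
The plan is to upper bound $f(O_{K})$ by the quantity $\sum_{k=1}^{K}\max_{s\in\mathbb{S}(G_{k-1})}f(s)$ appearing in the denominator of $\beta_{2}$, after which the claimed ratio bound follows immediately by division (noting that the denominator is positive since $\max_{s\in\mathbb{S}(G_{0})}f(s)\geq f(g_{1})>0$ and all terms are nonnegative by the standing assumption $f\geq 0$ on $\mathbb{T}$).

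First I would telescope the optimal value: $f(O_{K}) = \sum_{k=1}^{K}\bigl(f(O_{k})-f(O_{k-1})\bigr) = \sum_{k=1}^{K}\Delta(O_{k})$, using $f(O_{0})=f(\varnothing)=0$. Next I would apply assumption $\mathbf{A_{2}}$ termwise to get $\Delta(O_{k})\leq f(o_{k})$ for each $k\in\{1,\ldots,K\}$, hence $f(O_{K})\leq\sum_{k=1}^{K}f(o_{k})$. Then I would invoke assumption $\mathbf{A_{1}}$, which guarantees $o_{k}\in\mathbb{S}(G_{k-1})$ for each $k$, so that $f(o_{k})\leq\max_{s\in\mathbb{S}(G_{k-1})}f(s)$; summing over $k$ yields $f(O_{K})\leq\sum_{k=1}^{K}\max_{s\in\mathbb{S}(G_{k-1})}f(s)$. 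Finally, dividing both sides into $f(G_{K})$ gives $f(G_{K})/f(O_{K})\geq f(G_{K})\big/\sum_{k=1}^{K}\max_{s\in\mathbb{S}(G_{k-1})}f(s)=\beta_{2}$.

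There is essentially no hard step here: the argument is a two-line consequence of $\mathbf{A_{1}}$ and $\mathbf{A_{2}}$, and the point of the theorem is not the difficulty of the proof but that it discards assumption $\mathbf{A_{3}}$ and avoids introducing the greedy curvature $\alpha_{G}$ at all, replacing the somewhat lossy bound $f(o_{k})\leq\alpha_{G}\Delta(G_{k})$ of \thref{GenConforti} with the sharper pointwise bound $f(o_{k})\leq\max_{s\in\mathbb{S}(G_{k-1})}f(s)$. The only minor care needed is to confirm the denominator of $\beta_{2}$ is strictly positive so that the division is legitimate; I would address this in one sentence using $g_{1}\in\mathbb{S}(G_{0})$ and $f(g_{1})>0$. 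The genuinely substantive comparison — that $\beta_{2}\geq\beta_{1}$ — is deferred to \thref{betterThm} and is not part of this proof.
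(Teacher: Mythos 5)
Your proposal is correct and follows essentially the same route as the paper's proof: telescope $f(O_{K})$ into increments, bound each increment by $f(o_{k})$ via $\mathbf{A_{2}}$, and then bound $f(o_{k})$ by the feasible-set maximum via $\mathbf{A_{1}}$, with your remark on the positivity of the denominator being a small (and harmless) addition the paper omits.
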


\begin{proof}
$$
\begin{aligned}
    f(O_{K}) = \sum_{k=1}^{K} (f(O_{k})-f(O_{k-1})) = \sum_{k=1}^{K} \Delta(O_{k}) 
\end{aligned}
$$
By $\mathbf{A_2}$, $\Delta(O_{k}) \leq f(o_{k})$ for $k = 1,\ldots,K$.  Therefore,
\begin{equation*}
    \sum_{k=1}^{K}\Delta(O_{k}) \stackrel{(a)}{\leq} \sum_{k=1}^{K} f(o_{k}) \stackrel{(b)}{\leq} \sum_{k=1}^{K}\max_{s\in \mathbb{S}(G_{k-1})}f(s), 
\end{equation*}

\noindent where $(a)$ is by $\mathbf{A_{2}}$ and $(b)$ is by $\mathbf{A_1}$ and the definition of the $\max$ operator. 
\end{proof}

Theorem 2 essentially shows  
that the sum of the objective values under the first $K$ greedy and feasible actions serves a valid upper bound for $f(O_{K})$. This upper bound is easy to compute along the greedy trajectory and is obtained under weaker assumptions than those in Theorem~1.  

We now show that the bound in Theorem~2 is better than the bound in Theorem~1.  

\begin{theorem}
\thlabel{betterThm}
    Assuming $\mathbf{A_1}, \mathbf{A_2} \text{ and } \mathbf{A_3}$, $\beta_{2} \geq \beta_{1}$.
\end{theorem}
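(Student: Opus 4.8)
Writing $M_k:=\max_{s\in\mathbb{S}(G_{k-1})}f(s)$ (so that $\beta_2=f(G_K)/\sum_{k=1}^{K}M_k$) and $D_k:=\Delta(G_k)$, the claim $\beta_2\ge\beta_1$ is equivalent to the inequality $f(G_K)\ge\beta_1\sum_{k=1}^{K}M_k$, since $\sum_{k=1}^{K}M_k\ge M_1=f(g_1)>0$. The plan is to prove this last inequality directly. I would use: the telescoping identity $f(G_K)=\sum_{k=1}^{K}D_k$; the fact that $D_1=\Delta(G_1)=f(g_1)=M_1$ by the definition of the greedy solution; the bound $\alpha_G\ge1$ from \thref{alpha_G_1} (which gives $\tfrac1{\alpha_G}\le\beta_1\le1$); and the trivial rearrangement of the definition of $\beta_1$, namely $K\beta_1=1+(K-1)/\alpha_G$, which is what will force an exact cancellation at the end.

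The heart of the argument is two term-by-term estimates on $M_k$. First, for $k\in\{2,\dots,K\}$, assumption $\mathbf{A_3}$ makes every $s\in\mathbb{S}(G_{k-1})$ admissible in the inner maximization defining $\alpha_G$, so $f(s)\le\alpha_G\Delta(G_{k-1}s)$; taking $s$ to attain $M_k$ and invoking greedy maximality $\Delta(G_{k-1}s)\le\Delta(G_{k-1}g_k)=D_k$ yields $M_k\le\alpha_G D_k$, equivalently $D_k\ge M_k/\alpha_G$. Second, exactly as in the step ``$f(g_1)$ dominates each term'' in the proof of \thref{GenConforti} (inequality $(a)$ of $(\ref{beta_greedy_2})$), the greedy first symbol maximizes $f$ over all feasible initial symbols, so $f(s)\le f(g_1)$ for all $s\in\mathbb{S}(G_{k-1})$, i.e.\ $M_k\le M_1$ for every $k$.

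With these in hand I would compute, using $D_1=M_1$ and the telescoping identity,
\[
f(G_K)-\beta_1\sum_{k=1}^{K}M_k=(1-\beta_1)M_1+\sum_{k=2}^{K}\bigl(D_k-\beta_1 M_k\bigr).
\]
For each $k\ge2$, first replace $D_k$ by the smaller quantity $M_k/\alpha_G$ and then replace $M_k$ by the larger quantity $M_1$ (the second replacement reverses the inequality because $\tfrac1{\alpha_G}-\beta_1\le0$), obtaining $D_k-\beta_1 M_k\ge(\tfrac1{\alpha_G}-\beta_1)M_1$. Summing the $K-1$ such terms, the right-hand side above is at least $M_1\bigl(1+(K-1)/\alpha_G-K\beta_1\bigr)$, which is $0$ by the identity $K\beta_1=1+(K-1)/\alpha_G$. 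Hence $f(G_K)\ge\beta_1\sum_{k=1}^{K}M_k$, and dividing by $\sum_{k=1}^{K}M_k>0$ gives $\beta_2\ge\beta_1$.

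The main obstacle will be conceptual rather than computational: one must retain both estimates $M_k\le\alpha_G D_k$ and $M_k\le M_1$ and combine them in the per-term manner above. Collapsing prematurely does not work — summing only the first estimate gives $\sum_k M_k\le\alpha_G f(G_K)-(\alpha_G-1)f(g_1)$ and summing only the second gives $\sum_k M_k\le Kf(g_1)$, and each of these leads to a ratio strictly smaller than $\beta_1$. The perfect cancellation is entirely forced by the arithmetic identity $K\beta_1=1+(K-1)/\alpha_G$ built into the definition $\beta_1=\tfrac1K+\tfrac1{\alpha_G}\cdot\tfrac{K-1}{K}$. The only ingredient inherited verbatim from the proof of \thref{GenConforti} is the bound $M_k\le f(g_1)$; note also that when $\alpha_G=1$ one has $\beta_1=1$ and the estimate degenerates to $\sum_k M_k\le\sum_k D_k=f(G_K)$, so the bound $M_k\le M_1$ is only doing work when $\alpha_G>1$.
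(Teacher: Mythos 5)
Your proof is correct and rests on exactly the same two estimates as the paper's own argument---namely $\max_{s\in\mathbb{S}(G_{k-1})}f(s)\le\alpha_G\Delta(G_{k-1}s)\le\alpha_G\Delta(G_k)$ for $k\ge 2$ (via the definition of $\alpha_G$, $\mathbf{A_3}$, and greedy maximality) and $\max_{s\in\mathbb{S}(G_{k-1})}f(s)\le f(g_1)$ together with $\alpha_G\ge 1$---so it is essentially the same approach, merely repackaged as a term-by-term verification that $f(G_K)-\beta_1\sum_k M_k\ge 0$ instead of the paper's aggregated bound on $\sum_k f(s_k)$ followed by division. No gaps beyond the implicit feasibility assumption ($f(g_1)$ dominating $f(s)$ for $s\in\mathbb{S}(G_{k-1})$) that the paper itself also uses.
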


\begin{proof}
    Let $s_{k}$ denote the maximizer of $f(s)$ for $s\in \mathbb{S}(G_{k-1})$ where $k\in \{1,\ldots, K\}$. Then $ \sum_{k=1}^{K}\max_{s\in \mathbb{S}(G_{k-1})}f(s) = \sum_{k=1}^{K}f(s_k) \leq Kf(g_1)$ by the definition of $g_1$. The definition of $\alpha_{G}$ further gives us: 
    \begin{equation}
    \label{better_1}
    \begin{aligned}
        \sum_{k=1}^{K}f(s_{k}) & \leq f(g_1) + \sum_{k=2}^{K}\alpha_{G}\Delta(G_{k-1}s_{k}) \\
        & \leq f(g_1) + \sum_{k=2}^{K}\alpha_{G}\Delta(G_{k}) \\
        & = \alpha_{G}f(G_K) + (1-\alpha_{G})f(g_1).
    \end{aligned}
    \end{equation}
    

    Dividing $\alpha_{G}\sum_{k=1}^{K}f(s_{k})$ on both sides of inequality \eqref{better_1} yields:
    \begin{equation}
    \begin{aligned}
        \beta_{2} = \frac{f(G_{K})}{\sum_{k=1}^{K}f(s_{k})} & \geq \frac{1}{\alpha_{G}} - \frac{1-\alpha_{G}}{\alpha_{G}} * \frac{f(g_1)}{\sum_{k=1}^{K}f(s_{k})} \\
        & \stackrel{(a)}{\geq} \frac{1}{\alpha_{G}} - \frac{1-\alpha_{G}}{\alpha_{G}} * \frac{1}{K} \\
        &  = \frac{1}{K} + \frac{1}{\alpha_{G}} * \frac{K-1}{K} = \beta_{1}, 
    \end{aligned}
    \end{equation}
    where inequality (a) is due to the fact that $\sum_{k=1}^{K}f(s_k) \leq Kf(g_1)$ and $\alpha_{G} \geq 1$ from \thref{alpha_G_1}.
\end{proof}

\begin{remark}
    Recall the improved greedy curvature $\alpha_{k}$ proposed in \cite{van2023improved}. For $k \in \{2,\ldots,K\}$,
    \begin{equation*}
    \alpha_{k}:= \max_{s \in \{s | \Delta(G_{k-1}s) > 0, s \in \mathbb{S}(G_{k-1}) \}} \frac{f(s)}{\Delta(G_{k-1}s)}.
    \end{equation*}
    By the definition of $\alpha_{k}$, $f(s_k) \leq \alpha_{k}\Delta(G_{k-1}s_k) \leq \alpha_{k}\Delta(G_{k})$ for $k\in \{2,\ldots,K\}$. Hence, 
    \begin{equation}
    \label{cdc23}
        \sum_{k=1}^{K}f(s_{k}) \leq f(g_1) + \sum_{k=2}^{K}\alpha_{k}\Delta(G_k),
    \end{equation}
    The right hand side of \eqref{cdc23} is the upper bound of $f(O_K)$ in \cite{van2023improved}, which is proven to give a better performance bound than $\beta_{1}$. Therefore, $\beta_{2}$ is also an improved performance bound than that proposed in \cite{van2023improved}.
\end{remark}

\section{APPLICATIONS}

\subsection{Task Scheduling}
As a canonical optimization problem in operations research, task scheduling was studied in \cite{streeter2008online} and further investigated in \cite{zhang2015string} and \cite{liu2018performance}. In task scheduling, we aim to assign agents at multiple stages throughout the task to maximize the probability of successful completion. The detailed mathematical formulation is as follows. 

Assume a task is comprised of $K$ stages, and an agent needs to be assigned at stage $k \; (1\leq k \leq K)$ to accomplish the task. We have a pool of $N$ agents, denoted by $M_{1},\ldots,M_{N}$, and no agent is allowed to be repeatedly selected. For each agent $M_{i} \;(1\leq i \leq N)$, the probability of accomplishing the task at each stage $k \;(1\leq k \leq K)$ is denoted by $p(m_{i}^{k})$. Therefore, the probability of accomplishment after selecting $M_{i_{1}}, M_{i_{2}},\ldots,M_{i_{K}} $ in the end is: 
\begin{equation}
\label{task_scheduling}
    f(M_{i_{1}}M_{i_{2}}\cdots M_{i_{K}}) := 1-\prod_{k=1}^{K} (1-p(m_{i_{k}}^{k})), 
\end{equation}
where the subscript $i_{*}$ in $M$ and $m$ represents 
the selected agent number. Note that the probability of accomplishing the task for each agent is stage-variant. $f(M_{i_{1}}M_{i_{2}}\cdots M_{i_{K}})$ in $(\ref{task_scheduling})$ is henceforth a string function and order dependent. Streeter and Golovin proved in \cite{streeter2008online} that the greedy schemes yields a $\beta_{0} = (1 - e^{-1})$ performance bound when $f$ bears submodularity and monotonicity.  

Consider the following example with $K = 3$, $N = 5$, and the probabilities of successful completion in Table. \ref{Tash Scheduling Table}. The optimal sequence $(O_{3})$ and greedy sequence $(G_{3})$ are equal in this example with $G_{3} = O_{3} = M_{1}M_{2}M_{3}$. We can verify that assumption $\mathbf{A_{1}}$ is satisfied, and $f$ is string submodular. The three performance bounds are as follows:
\begin{equation*}
\begin{aligned}
    &\beta_{2} =  \frac{f(M_{1}M_{2}M_{3})}{f(M_{1})+ f(M_{2}) + f(M_{3})} =  0.7816 ;\\
    & \beta_{1} = \frac{1}{K} + \frac{1}{\alpha_{G}}*\frac{K-1}{K} = 0.5893; \\
    & \beta_{0} = 1- e^{-1} = 0.6321.
\end{aligned}
\end{equation*}

We observe that the $\beta_{2}$ bound outperforms the other two bounds. This example shows the curvature bounds are not necessarily better than the original $(1-e^{-1})$ bound. On some problems they are better, in others they are not. Therefore, we need to choose the largest of the three bounds. But this is easy, because all three bounds are easily compuatble. In the next example, we show a plot that compares the three bounds under different parameters of another optimization problem.

\begin{table}[h!]
    \centering
    \begin{tabular}{||c c c c||} 
     \hline
      & \thead{Stage 1} & \thead{Stage 2} & \thead{Stage 3}\\ [0.5ex] 
     \hline\hline
     $M_1$ &  \thead{0.2} & \thead{0.16} & \thead{0.14} \\ 
     $M_2$ & \thead{0.18} & \thead{0.16} & \thead{0.14} \\ 
     $M_3$ & \thead{0.16} & \thead{0.14} & \thead{0.14} \\
     $M_4$ & \thead{0.14} & \thead{0.12} & \thead{0.10} \\
     $M_5$ & \thead{0.12} & \thead{0.1} & \thead{0.08} \\ [1ex] 
     \hline
    \end{tabular}
    \caption{Probability of successful completion for each agent at each stage.}
    \label{Tash Scheduling Table}
\end{table}

\subsection{Multi-agent Sensor Coverage}

The multi-agent sensor coverage problem was originally studied in \cite{zhong2011distributed} and further analyzed in \cite{sun2019exploiting} and \cite{welikala2022new}. In a given mission space, we need to find a placement of a set of homogeneous sensors to maximize the probability of detecting randomly occurring events. This problem can be formulated as a set optimization problem, which is a special case in the string setting. In this subsection, we demonstrate our theoretical results by applying them to a discrete version of the coverage problem. Our simplified version can be easily generalized to more complicated settings within the same theoretical framework. 

The mission space $\Omega \subset \mathbb{R}^{2}$ is modeled as a non-self-intersecting polygon where $K$ homogeneous sensors will be placed to detect a randomly occurring event in $\Omega$. For simplicity of calculation, we assume both the sensors and the random event can only be placed and occur at lattice points. We denote the feasible space for sensor placement and event occurrence as $\Omega^{F}$. Our goal is to maximize the overall likelihood of successful detection in the mission space, as illustrated in Fig.~\ref{sensors}. 

\begin{figure}[!ht]
    \centering
    \includegraphics[width=0.98\columnwidth]{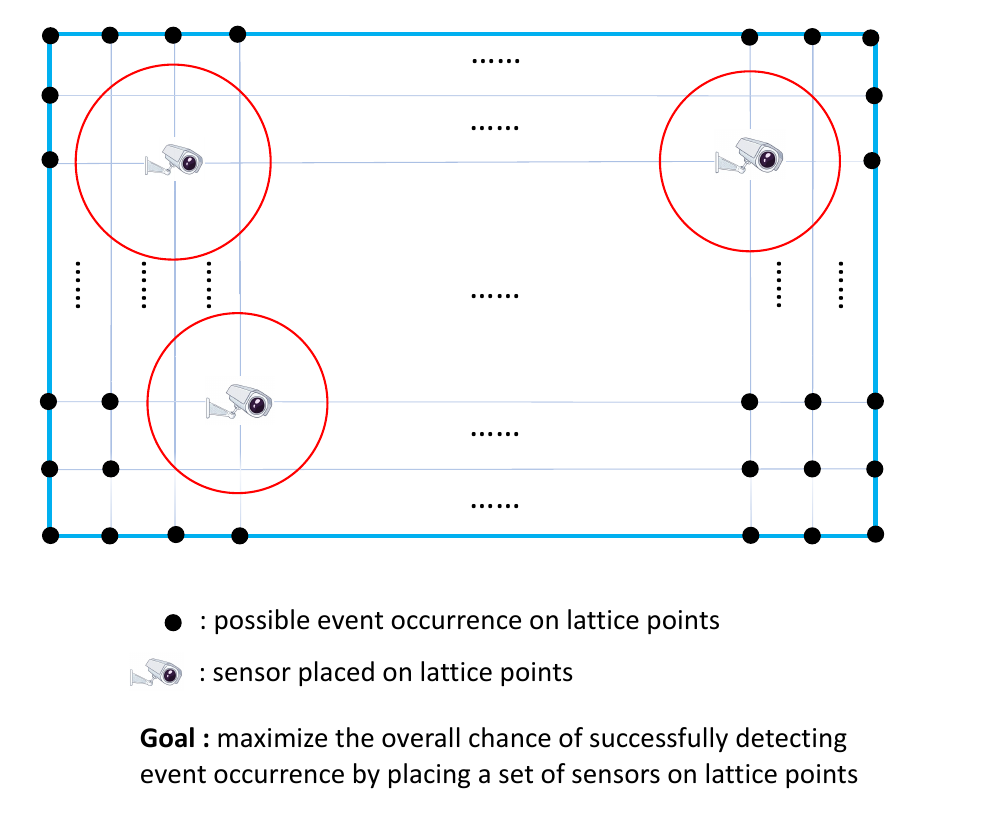}
    \caption{Sensor Coverage in a Mission Space} 
    \label{sensors}
\end{figure}

The likelihood of event occurrence over $\Omega^{F}$ is given by an event mass function $R: \Omega^{F} \xrightarrow{} \mathbb{R}_{\geq 0}$, and we assume that $\sum_{\mathbf{x} \in \Omega^{F}} R(\mathbf{x}) < \infty$. $R(\mathbf{x})$ may reflect a particular distribution if some prior information is available. Otherwise, $R(\mathbf{x}) = 1$ when no prior information is obtained. The locations of all the sensors are represented as $\mathbf{s} = (\mathbf{s}_1,\mathbf{s}_2,\ldots,\mathbf{s}_K) \in (\Omega^{F})^{K}$, where $\mathbf{s}_{i} \; (1\leq i \leq K)$ are coordinates of the placed sensors. Each sensor placed at $\mathbf{s_{i}}$ can detect any occurring event at location $\mathbf{x}$ with probability $p(\mathbf{x},\mathbf{s}_i) = e^{-\lambda \|\mathbf{x}-\mathbf{s}_i\|}$, where $\lambda$ is the decay rate characterizing how quick the sensing capability decays along the distance. 

Assuming all the sensors are working independently, the probability of detecting an occurring event at location $\mathbf{x} \in \Omega^{F}$ after placing $K$ homogeneous sensors at locations $\mathbf{s}$ is $p(\mathbf{x},\mathbf{s}) = 1-\prod_{i=1}^{K}\left( 1-p(\mathbf{x},\mathbf{s}_i) \right)$. In order to consider the whole feasible space for event occurrence, we need to incorporate the event mass function. Our objective function becomes $H(\mathbf{s}) = \sum_{\mathbf{x} \in \Omega^{F}} R(\mathbf{x})p(\mathbf{x},\mathbf{s})$. The goal is to maximize $H(\mathbf{s})$:
\begin{equation}
\begin{aligned}
\label{obj_fun_sensor}
    & \text{maximize } H(\mathbf{s}) \\
    & \text{subject to } \mathbf{s} \in (\Omega^{F})^{K}.
\end{aligned}
\end{equation}


If $n$ lattice points in $\Omega^{F}$ are feasible for sensor placement, we need to select $K$ out of $n$ locations with its complexity being $n!/\left( K!(n-K)! \right)$. This becomes a set optimization problem, and exhaustive search is computationally intractable when $n$ is large. Therefore, greedy algorithm is an approach for an approximate solution in polynomial time. 
It was proved that the continuous version of $H(\mathbf{s})$ is submodular in \cite{sun2019exploiting}, and it is not difficult to verify that its discrete version is also submodular.



In our experiment, we consider a rectangular mission space $\Omega$ of size $50 \times 40$. A set of $K$ homogeneous sensors are waiting to be deployed on those integer coordinates within $\Omega$, denoted by $\Omega^{F}$. For a point $\mathbf{p} = (s_{x},s_{y})$, the event mass function  is given by $R(\mathbf{p}) = \left(s_{x}+s_{y} \right) / \left( s_{x\text{ max}}+y_{y\text{ max}} \right)$, where $s_{x\text{ max}} = 50$ and $s_{y\text{ max}} = 40$ are the largest values of the $x$ and $y$ coordinates in the mission space. This event mass function implies that the random events are more likely to occur around the top right corner of the rectangular mission space.  

A comparison of the performance bounds under different decay rate with $K = 5$ is shown in the upper figure of Fig.~\ref{sensors_bound}. Small decay rates imply good sensing capability and strong submodularity, under which the greedy scheme does not produce nearly optimal objective function value. In the upper figure of Fig. \ref{sensors_bound}, the $\beta_{2}$ bound (red line) always exceeds the $\beta_{1}$ bound (blue line), illustrating \thref{betterThm}.  In addition, we can observe instances in which the $\beta_{2}$ bound is larger than the $\beta_0=(1-e^{-1})$, while the $\beta_1$ bound is below this value. 

A comparison of performance bounds under different number of placed sensors with $\lambda = 1$ is shown in the lower figure of Fig.~\ref{sensors_bound}. We can still observe that the $\beta_{2}$ bound (red line) always exceeds the $\beta_{1}$  bound (blue line) with significant advantages. Both $\beta_{1}$ and $\beta_{2}$ bound decrease as the number of placed sensors $K$ increases. 

\begin{figure}[hbt!]
    \centering
    \includegraphics[width=1\columnwidth]{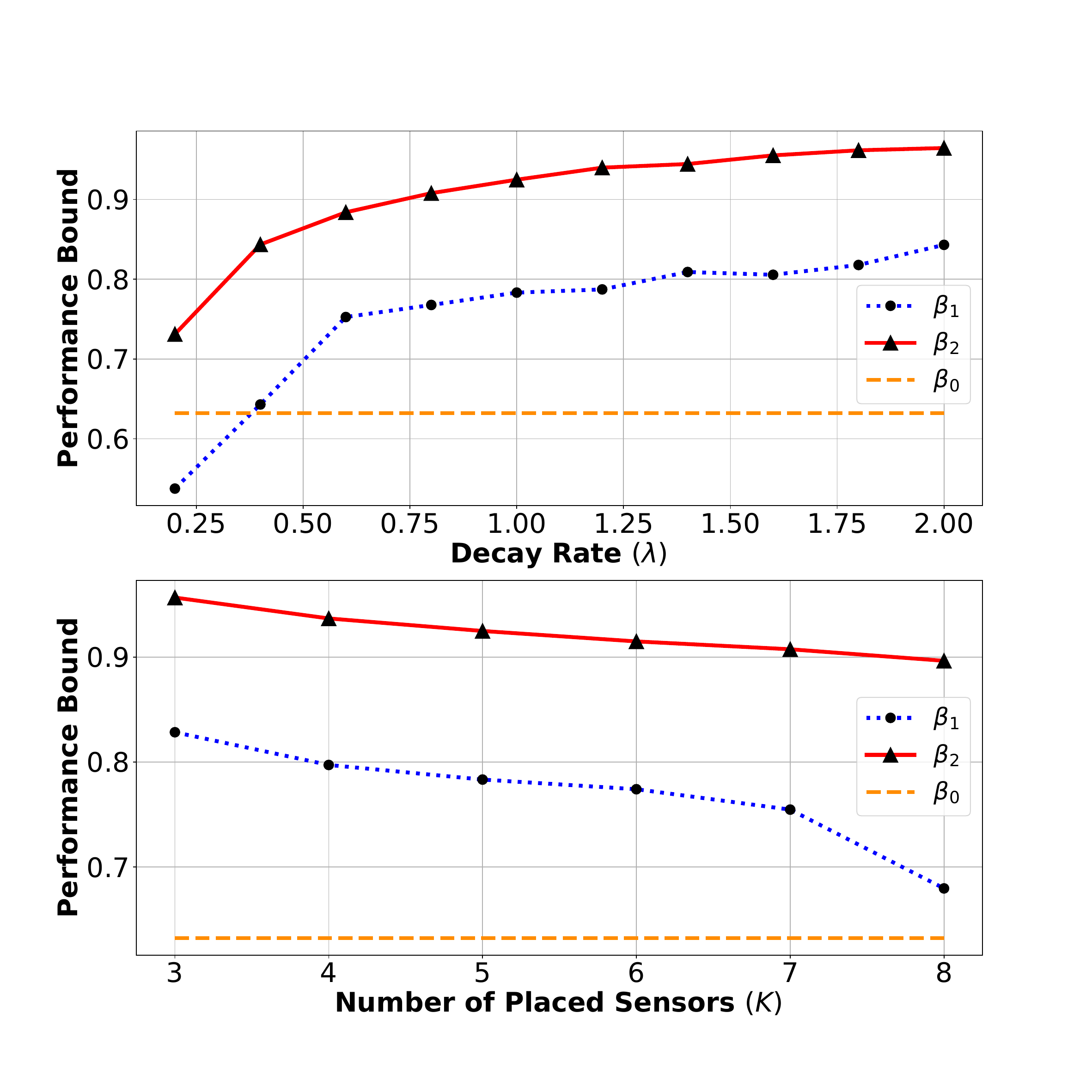}
    \caption{Upper Figure: Performance bound comparison under different decay rates when the number of placed sensors $K=5$; Lower Figure: Performance bound comparison under different number of placed sensors when the decay rate $\lambda = 1$.}
    \label{sensors_bound}
\end{figure}

\section{CONCLUSION}

We derived a computable bound $\beta_{1}$ for greedy solutions to string optimization problems by extending the notion of greedy curvature and the proof technique used by Conforti and Cornu\'{e}jols \cite{conforti1984submodular} from set functions to string functions. However, in deriving our bound we did not use submodularity. Rather we relied on weaker conditions on the string objective function, and hence produced a stronger bounding result. We then derived another computable bound $\beta_{2}$ that relies on even weaker assumptions than those used in deriving our first bound $\beta_{1}$, further strengthening the bounding result. We also showed that this second bound $\beta_{2}$ has a larger value than the first bound $\beta_{2}$. We demonstrated the superiority of our bound on two applications, task scheduling and multi-agent sensor coverage. 



\bibliographystyle{IEEEtran}
\bibliography{root}

\begin{thebibliography}{10}
\providecommand{\url}[1]{#1}
\csname url@samestyle\endcsname
\providecommand{\newblock}{\relax}
\providecommand{\bibinfo}[2]{#2}
\providecommand{\BIBentrySTDinterwordspacing}{\spaceskip=0pt\relax}
\providecommand{\BIBentryALTinterwordstretchfactor}{4}
\providecommand{\BIBentryALTinterwordspacing}{\spaceskip=\fontdimen2\font plus
\BIBentryALTinterwordstretchfactor\fontdimen3\font minus \fontdimen4\font\relax}
\providecommand{\BIBforeignlanguage}[2]{{%
\expandafter\ifx\csname l@#1\endcsname\relax
\typeout{** WARNING: IEEEtran.bst: No hyphenation pattern has been}%
\typeout{** loaded for the language `#1'. Using the pattern for}%
\typeout{** the default language instead.}%
\else
\language=\csname l@#1\endcsname
\fi
#2}}
\providecommand{\BIBdecl}{\relax}
\BIBdecl

\bibitem{fisher1978analysis}
M.~L. Fisher, G.~L. Nemhauser, and L.~A. Wolsey, \emph{An analysis of approximations for maximizing submodular set functions—II}.\hskip 1em plus 0.5em minus 0.4em\relax Springer Berlin Heidelberg, 1978.

\bibitem{nemhauser1978analysis}
G.~L. Nemhauser, L.~A. Wolsey, and M.~L. Fisher, ``An analysis of approximations for maximizing submodular set functions—{I},'' \emph{Math. Program.}, vol.~14, pp. 265--294, 1978.

\bibitem{conforti1984submodular}
M.~Conforti and G.~Cornu{\'e}jols, ``Submodular set functions, matroids and the greedy algorithm: Tight worst-case bounds and some generalizations of the {R}ado-{E}dmonds theorem,'' \emph{Discrete. Appl. Math.}, vol.~7, no.~3, pp. 251--274, 1984.

\bibitem{vondrak2010submodularity}
J.~Vondr{\'a}k, ``Submodularity and curvature: The optimal algorithm (combinatorial optimization and discrete algorithms),'' \emph{RIMS Kokyuroku Bessatsu B}, vol.~23, pp. 253--266, 2010.

\bibitem{calinescu2011maximizing}
G.~Calinescu, C.~Chekuri, M.~Pal, and J.~Vondr{\'a}k, ``Maximizing a monotone submodular function subject to a matroid constraint,'' \emph{SIAM J. Comput.}, vol.~40, no.~6, pp. 1740--1766, 2011.

\bibitem{wang2016approximation}
Z.~Wang, B.~Moran, X.~Wang, and Q.~Pan, ``Approximation for maximizing monotone non-decreasing set functions with a greedy method,'' \emph{J. Comb. Optim.}, vol.~31, pp. 29--43, 2016.

\bibitem{welikala2022new}
S.~Welikala, C.~G. Cassandras, H.~Lin, and P.~J. Antsaklis, ``A new performance bound for submodular maximization problems and its application to multi-agent optimal coverage problems,'' \emph{Automatica}, vol. 144, p. 110493, 2022.

\bibitem{zhang2015string}
Z.~Zhang, E.~K.~P. Chong, A.~Pezeshki, and W.~Moran, ``String submodular functions with curvature constraints,'' \emph{IEEE Trans. Autom. Control}, vol.~61, no.~3, pp. 601--616, 2015.

\bibitem{alaei2021maximizing}
S.~Alaei, A.~Makhdoumi, and A.~Malekian, ``Maximizing sequence-submodular functions and its application to online advertising,'' \emph{Manage. Sci.}, vol.~67, no.~10, pp. 6030--6054, 2021.

\bibitem{van2023improved}
B.~Van~Over, B.~Li, E.~K.~P. Chong, and A.~Pezeshki, ``An improved greedy curvature bound in finite-horizon string optimization with an application to a sensor coverage problem,'' in \emph{Proc. 62nd IEEE Conf. Decision Control (CDC)}, Singapore, Dec. 2023, pp. 1257--1262.

\bibitem{streeter2008online}
M.~Streeter and D.~Golovin, ``An online algorithm for maximizing submodular functions,'' in \emph{Proc. Adv. in Neural Inf. Process. Syst. 21 (NIPS 2008)}, vol.~21, Vancouver, BC, Canada, Dec. 2008, pp. 1577--1584.

\bibitem{liu2018performance}
Y.~Liu, Z.~Zhang, E.~K.~P. Chong, and A.~Pezeshki, ``Performance bounds with curvature for batched greedy optimization,'' \emph{J. Optim. Theory Appl.}, vol. 177, pp. 535--562, 2018.

\bibitem{zhong2011distributed}
M.~Zhong and C.~G. Cassandras, ``Distributed coverage control and data collection with mobile sensor networks,'' \emph{IEEE Trans. Autom. Control}, vol.~56, no.~10, pp. 2445--2455, 2011.

\bibitem{sun2019exploiting}
X.~Sun, C.~G. Cassandras, and X.~Meng, ``Exploiting submodularity to quantify near-optimality in multi-agent coverage problems,'' \emph{Automatica}, vol. 100, pp. 349--359, 2019.

\end{thebibliography}






\end{document}